\documentclass{article}



\usepackage{cite}   
\usepackage[numbers]{natbib}
\usepackage{hyperref}
\usepackage{hypernat}
\usepackage{graphicx}
\usepackage{pgfplots}
\usepackage[final]{pdfpages}
\usepackage{url}
\usepackage{verbatim}
\usepackage{listings}
\usepackage{bm}
\usepackage[algo2e, vlined]{algorithm2e}

\SetCommentSty{mycommfont}
\SetKwRepeat{Do}{do}{while}
\SetKw{Or}{or}
\SetKw{And}{and}

\usepackage{subcaption}
\usepackage{lipsum}
\usepackage{multicol}
\usepackage{amsthm}
\newtheorem{definition}{Definition}

\newtheorem{thm}{Theorem}
\newtheorem{lem}[thm]{Lemma}

\usepackage[shortlabels]{enumitem}
\setlist[enumerate]{nosep}

\makeatletter
    %
    \def\realunderscores{\catcode`\_=\active}
    \def\real@ltgtus{%
        \catcode`<=\active
        \catcode`>=\active
        \realunderscores
    }
    {\real@ltgtus
        \gdef<{\futurelet\@let@token\less@than}%
        \gdef>{\futurelet\@let@token\greater@than}%
        \gdef_{\underscore}%
    }
    \DeclareRobustCommand{\underscore}{\ifmmode\sb\else\textunderscore\fi}
    \newcommand{\less@than}{\ifdim\fontdimen4\font=0pt\string<\else
        \leavevmode\mathhexbox13C\fi}
    \newcommand{\greater@than}{\ifdim\fontdimen4\font=0pt\string>\else
        \leavevmode\mathhexbox13E\fi}
    \DeclareRobustCommand{\pcode}{\begingroup
        \real@ltgtus
        \@makeother\&\@makeother\#%
        \@makeother\^\@makeother\%\@makeother\~%
        \@pcode}
    \let\pcodefont\sf
    \def\@pcode#1{{\pcodefont #1}\strut\endgroup}
    \let\code\pcode
\makeatother

\theoremstyle{plain}
\newcommand*{\Note}[1]{\textcolor{red}{\textit{(#1)}}}
\newcommand*{\MLSnote}[1]{\Note{MLS: #1}}
\newcommand*{\New}[1]{\textcolor{blue}{\textit{(#1)}}}
\newcommand*{\WTCnote}[1]{\New{WTC: #1}}

\definecolor{code_bg}{rgb}{0.98,0.98,0.98}
\definecolor{code_hl}{rgb}{1.0,0.98,0.8}

\newif\iftighten \tightenfalse
\newif\ifarxiv \arxivtrue

\setlist[enumerate]{style=unboxed,leftmargin=*,topsep=1ex,itemsep=0ex}
\setlist[itemize]{style=unboxed,leftmargin=*,topsep=1ex,itemsep=0ex}
\setlist[description]{style=unboxed,leftmargin=*,topsep=1ex,itemsep=0ex}

\SetKwProg{Fn}{Function}{}{}
\SetKwProg{Class}{Class}{}{}
\LinesNumbered
\DontPrintSemicolon


\bibliographystyle{plainnat}

\makeatletter
\newcommand{\removelatexerror}{\let\@latex@error\@gobble}
\let\old@printtopmatter\@printtopmatter
\def\@printtopmatter{%
  \global\setbox\mktitle@bx=\vbox{\noindent\box\mktitle@bx\par\bigskip}%
  \old@printtopmatter}
\makeatother

\begin{document}

\title{Transactional Composition of Nonblocking Data Structures}

\author{Wentao Cai, Haosen Wen, and Michael L. Scott \\
\texttt{\{wcai6,hwen5,scott\}@ur.rochester.edu} \\
University of Rochester
}
\maketitle
\begin{abstract}
This paper introduces \emph{nonblocking transaction composition}
(NBTC), a new methodology for atomic composition of
nonblocking operations on concurrent data structures. Unlike previous
software transactional
memory (STM) approaches, NBTC leverages the linearizability of
existing nonblocking structures, reducing the number of memory
accesses that must be executed together, atomically, to only one per
operation in most cases (these are typically the linearizing
instructions of the constituent operations).

Our obstruction-free implementation of NBTC, which we call 
\emph{Medley}, makes it easy
to transform most 
nonblocking data structures into transactional counterparts while
preserving their liveness and high concurrency.
In our experiments, Medley outperforms
Lock-Free
Transactional Transform (LFTT), the fastest prior competing methodology,
by 40--170\%.
The marginal overhead of Medley's transactional composition, relative to
separate operations performed in succession, is roughly 2.2$\times$.

For \emph{persistent} data structures, we observe that failure atomicity for
transactions can be achieved ``almost for free'' with epoch-based
\emph{periodic persistence}.  Toward that end, we integrate Medley with
\emph{nbMontage}, a general system for periodically persistent data
structures. 
The resulting \emph{txMontage} provides ACID transactions and achieves
throughput up to two orders of magnitude higher than that of the OneFile
persistent STM system.

\end{abstract}





\section{Introduction}
\label{sec:intro}

Nonblocking concurrent data structures, first explored in the 1970s,
remain an active topic of research today.
In such structures, there is no reachable state of the system
that can prevent an individual operation from making forward progress.
This \emph{liveness} property is highly desirable in multi-threaded
programs that
\iftighten
are sensitive to the inopportune preemption of resource-holding threads.
\else
aim for high scalability and are sensitive to high tail latency caused by
inopportune preemption of resource-holding threads.
\fi

Many multi-threaded systems\iftighten\else
, including those for finance, travel~\cite{minh-iswc-2008}, warehouse
management~\cite{tpc-tpcc-2010}, and databases in
general~\cite{tu-sosp-2013},\fi\ need to compose
operations into transactions that occur in an all-or-nothing fashion
(i.e., atomically).
\iftighten\else Concurrent data structures, however, ensure
atomicity only for individual operations; composing a transaction across
operations requires nontrivial programming effort and introduces high
overhead.  Preserving nonblocking liveness for every transaction is even
more difficult.

\fi
One potential solution can be found in
\emph{software transactional memory} (STM) systems, which convert
almost arbitrary sequential code into speculative
transactions.  Several STM systems provide nonblocking
progress~\cite{herlihy-podc-2003, fraser-thesis-2004,
marathe-transact-2006, marathe-ppopp-2008, tabba-spaa-2009}.
Most instrument each memory access and arrange to
restart operations that conflict at the level of individual loads and
stores.  The resulting programming model is attractive, but the
instrumentation typically imposes 3--10$\times$
overhead~\cite[Sec.\ 9.2.3]{scott-sms-2013}.


Inspired by STM, Spiegelman et
al.~\cite{spiegelman-2016-pldi} proposed \emph{transactional data
structure libraries} (TDSL), which introduce (blocking) transactions
for certain hand-modified concurrent data structures.
By observing that reads need to be tracked only on 
\emph{critical nodes} whose updates may indicate semantic conflicts, 
TDSL reduces read set
size and achieves better performance than general STMs.

Herlihy and Koskinen~\cite{herlihy-ppopp-2008} proposed
\emph{transactional boosting}, a (blocking) methodology that allows
an STM system to incorporate operations on existing concurrent data
structures.
Using a system of \emph{semantic locks} (e.g., with one lock per key
in a mapping), transactions arrange to
execute concurrently so long as their boosted operations are logically
independent, regardless of low-level conflicts.
A transaction that restarts due to a semantic
conflict (or to a low-level conflict outside the boosted code)
will roll back any already-completed boosted operations by
performing explicitly identified \emph{inverse operations}%
\iftighten\else. An
\mbox{\code{insert(k,v)}} operation, for example, would be rolled back by
performing \code{remove(k)}\fi.
Transactional boosting leverages the
potential for high concurrency in existing data structures, but is
intrinsically lock-based, and is not fully general:
operations on a single-linked FIFO queue, for example, have no obvious inverse.

In work concurrent to TDSL, Zhang et al.~\cite{zhang-spaa-2016} proposed
the \emph{Lock-Free Transactional Transform} (LFTT), a nonblocking
methodology to compose nonblocking data structures, based on the
observation that only \iftighten critical nodes \else certain nodes---those critical to transaction
semantics---\fi really matter in conflict management.
Each operation on an LFTT structure \emph{publishes}, on
every critical node, a description of the transaction of which it is a
part, so that conflicting transactions can see and \emph{help} each
other%
\iftighten\else. A \code{remove(7)} operation, for example, would publish a
description of its transaction on the node in its structure with
key~7\fi.
Initially, LFTT supported only static transactions, whose
constituent operations were all known in advance.  Subsequently, LaBorde
et al.~\cite{laborde-pmam-2019} proposed a \emph{Dynamic Transactional
  Transform} (DTT) that generalizes LFTT to dynamic transactions
(specified as lambda expressions). Concurrently, Elizarov et
al.~\cite{elizarov-ppopp-2019} proposed LOFT, which is similar to LFTT
but avoids 
incorrectly repeated helping.

Unfortunately, as in transactional boosting, the
need to identify critical nodes tends to limit
LFTT and DTT to data structures representing sets and mappings.
DTT's publishing and helping mechanisms also require that the
``glue'' code between operations be
fully reentrant (to admit concurrent execution by helping
threads~\cite{laborde-pmam-2019}) and may result in redundant work
when conflicts arise.
Worse, for read-heavy workloads,
LFTT and DTT require readers to be \emph{visible} to writers, introducing
metadata updates that significantly increase contention in the cache
coherence protocol.

In our work, we propose \emph{NonBlocking Transaction Composition}
(NBTC), a new methodology that can create transactional versions of a
wide variety of concurrent data structures
while preserving nonblocking progress and incurring significantly lower
overhead than traditional STM\@.
The intuition behind NBTC is that in already nonblocking
structures, only \emph{critical memory accesses}---for the most part,
the linearizing load and compare-and-swap (CAS) instructions---need to occur
atomically, while most pre-linearization memory accesses can safely be
executed as they are encountered, and post-linearization accesses can be
postponed until after the transaction commits.

In comparison to STM, NBTC significantly reduces the number of
memory accesses that must be instru\-mented---typically to only one per
constituent operation.  Unlike transactional boosting and transactional
transforms, NBTC brings the focus back from semantics to
low-level memory accesses, thereby enabling mechanical transformation
of existing structures and accommodating almost arbitrary
abstractions---much more than sets and mappings.
NBTC also supports dynamic transactions, invisible readers, and
non-reentrant ``glue'' code between the operations of a transaction.
The one requirement for compatibility is that the linearization points
of constituent operations must be \emph{immediately
  identifiable}: each operation must be able to tell when it has
linearized at run time, without performing any additional shared-memory
accesses.  Most nonblocking structures in the literature appear to meet
this requirement.




To assess the practicality of NBTC, we have built an ob\-struc\-tion-free
implementation, \emph{Medley}, that uses
a variant of Harris et al.'s 
multi-word CAS~\cite{harris-disc-2002} to 
execute the critical memory accesses of each transaction atomically,
eagerly resolving conflicting transactions as they are discovered.
Using Medley, we have created NBTC versions of Michael and Scott's
queue~\cite{michael-podc-1996}, Fraser's
skiplist~\cite{fraser-thesis-2004}, the rotating skiplist of Dick et
al.~\cite{dick-rotating-2016}, Michael's chained hash
table~\cite{michael-spaa-2002}, and Natarajan and Mittal's binary search
tree~\cite{natarajan-ppopp-2014}.  All of the transformations were
straightforward.

In the traditional language of database
transactions~\cite{haerder-compsur-1983}, Medley provides isolation and
consistency.  Building on recent work on persistent memory, we have also
integrated Medley with the \emph{nbMontage} system of
\citet{cai-disc-2021} to create a system, \emph{txMontage}, that
provides failure atomicity and durability as well---i.e., full ACID
transactions.
Specifically, we leverage the \emph{epoch} system of nbMontage, which
divides time into coarse-grain temporal intervals and recovers, on
failure, to the state of a recent epoch boundary.  By folding a check of
the epoch number into its multi-word CAS, txMontage ensures that
operations of the same transaction always linearize in the same epoch,
thereby obtaining failure atomicity and durability ``almost for free.''

Summarizing contributions:

\begin{itemize}[leftmargin=1em,parsep=.5ex plus .5ex]
    \item
        (Section~\ref{sec:nbtc})
        We introduce \emph{nonblocking transaction composition} (NBTC), a
        new methodology with which to compose the operations of
        nonblocking data structures.
    \item
        (Section~\ref{sec:impl})
        Deploying NBTC, we implement \emph{Medley}, a general system for
        transactional nonblocking structures. Medley's easy-to-use
        API and mechanical transform make it easy to convert
        compatible nonblocking structures to transactional form.
    \item
        (Section~\ref{sec:persistence})
        We integrate Medley with nbMontage to create \emph{txMontage},
        providing not only transactional isolation and consistency, but
        also failure atomicity and durability.
    \item
        (Section~\ref{sec:proof})
        We argue that using NBTC and Medley, transactions composed of
        nonblocking structures are nonblocking and strictly serializable. We
        also argue that transactions with txMontage provide a persistent
        variant of strict serializability analogous to the buffered
        durable linearizability of \citet{izraelevitz-disc-2016}.
    \item
        (Section~\ref{sec:exp})
        We present performance results, confirming that Medley imposes
        relatively modest overhead and scales to large numbers of
        threads. Specifically, Medley outperforms LFTT by $1.4\times$ to
        $2.7\times$ and outperforms TDSL and the OneFile nonblocking
        STM~\cite{ramalhete-dsn-2019} system by an order of
        magnitude. On persistent memory, txMontage outperforms
        nonblocking persistent STM by two orders of magnitude.
\end{itemize}

\section{Nonblocking Transaction Composition}
\label{sec:nbtc}

\emph{Nonblocking transaction composition} (NBTC) is a new methodology
that fully leverages the linearizability of nonblocking data structure
operations.
NBTC obtains strict serializability by atomically performing only the
\emph{critical memory accesses} of composed operations.
It
supports a large subset of the nonblocking data structures in the
literature (characterized more precisely below), preserving the high
concurrency and nonblocking liveness of the transformed structures. 

\subsection{NBTC Composability}
The key to NBTC composability is the \emph{immediately identifiable
linearization point}. Specifically:

\begin{definition}
A data structure operation has an \emph{immediately
  identifiable linearization point} if:
\begin{enumerate}

  \item statically, we can identify every instruction that may
  potentially serve as the operation's linearization point. Such
  an instruction must be a load for a read-only operation or a
  compare-and-swap (CAS) for an update operation;

  \item dynamically, after executing a potentially linearizing
  instruction, we can determine whether it was indeed the linearization
  point. A linearizing load must be identified before the operation
  returns; a linearizing CAS must be identified without performing any
  additional shared-memory accesses.
\end{enumerate}

\end{definition}

There can be more than one potential linearization point in the code of
an operation, but only one of them will constitute the linearization
point in any given invocation.


\begin{definition}\label{def:nbtc_composable}
A nonblocking data structure is \emph{NBTC-com\-pos\-able} if
each of its operations has an immediately
identifiable linearization point.
\end{definition}

\iftighten\else
While it may be possible to relax this definition, the current version
accommodates a very large number of existing nonblocking structures.
\fi




\subsection{The Methodology}
\label{sec:methodology}

It is widely understood that most nonblocking operations comprise a
``planning'' phase and a ``cleanup'' phase, separated by a linearizing
instruction~\cite{timnat-ppopp-2014,friedman-pldi-2020}.  Executing
the planning phase does not commit the operation to success; cleanup,
if needed, can be performed by any thread. The basic strategy in NBTC
is to perform the planning for all constituent operations of the
current transaction, then linearize all those operations together,
atomically, and finally perform all cleanup.  Our survey of existing
data structures and composition patterns reveals two principle
complications with this strategy.


 
The first complication involves the notion of a \emph{publication
  point}, where an operation may 
become visible to other threads but not yet linearize.  Because
publication can alter the behavior of other threads, it
must generally (like a linearization point) remain speculative until the entire
transaction is ready to commit.  An example can be seen in the binary
search tree of Natarajan and Mittal~\cite{natarajan-ppopp-2014}, where
an update operation $o$ may perform a CAS that publishes its intent to
linearize soon but not quite yet.  After this publication point, either
$o$ itself or any other update that encounters the publication notice
may attempt to
linearize $o$ (in the interest of performance, a read operation will
ignore it).  Notably, CAS instructions that serve to help other
(already linearized) operations, without revealing the nature of the
current operation, need not count as publication.

The second complication arises when a transaction, $t$, performs two or
more operations on the same data structure and one of the later
operations (call it $o_2$) depends on the outcome of an earlier
operation (call it $o_1$).  Here the thread executing $t$ must proceed
as if $o_1$ has completed, but other threads must ignore it.  If $o_1$
requires cleanup (something that NBTC will normally delay
until after transaction commit), $o_2$ may need to help $o_1$ before it
can proceed, while other transactions should not even be aware of
$o_1$'s existence.


Both complicating cases can be handled by introducing the notion of a
\emph{speculation interval} in which CAS instructions must be
completed together for an operation to take effect as part of a
transaction.  This is similar to the \emph{CAS executor} phase in a
\emph{normalized} nonblocking data structure~\cite{timnat-ppopp-2014},
but not the same, largely due to the second complication.
For an operation that becomes visible before its linearization point,
it suffices to include in the speculation interval all CAS operations
between the publication and linearization points, inclusive.  For an
operation $o_2$ that needs to see an earlier operation $o_1$ in the
same transaction, it suffices to track the transaction's writes and to
start $o_2$'s speculation interval no later than the first instruction
that accesses a location written by~$o_1$.

\begin{definition}\label{def:critical}
A bit more precisely, we say
\begin{itemize}
\item
    A CAS instruction in operation $o$ of thread $t$ in history $H$ is
    \emph{benign} if there is no extension $H^\prime$ of $H$ such that
    $t$ executes no more instructions in $H^\prime$ and yet $o$
    linearizes in $H^\prime$ nonetheless.
\item
    The first CAS instruction of $o$ that is not benign is $o$'s
    \emph{publication point} (this will often be the same as its
    linearization point).
\item
    The \emph{speculation interval} of $o$ begins either at the
    publication point or at the first instruction that sees a value
    speculatively written by some earlier operation in the same
    transaction (whichever comes first) and extends through $o$'s
    linearization point.
\item
    A load in a read-only operation is \emph{critical} if it is the
    immediately identifiable linearization point of the operation.  A
    CAS in an update operation is critical if it lies in the
    speculation interval.
\end{itemize}
\end{definition}

Without loss of generality, we assume that all updates to shared
memory (other than initialization of objects not yet visible to other
threads) are effected via CAS.

Given these definitions, the NBTC methodology is straightforward: 
To atomically execute a set of operations on NBTC-composable data
structures, we transform every operation such that (1) instructions
prior to the speculation interval and non-critical instructions in the
speculation interval are executed on the fly as a transaction
encounters them; (2) critical instructions are executed in a
speculative fashion, so they will take effect, atomically, only on
transaction commit; and (3) instructions after the speculation
interval are postponed until after the commit.

\section{The Medley System}
\label{sec:impl}

To illustrate NBTC, we have written a system, \emph{Medley}, that
(1)~instruments critical instructions, executes them speculatively,
and commits them atomically using \emph{M-compare-N-swap}, our variant
of the multi-word CAS of \citet{harris-disc-2002};
\linebreak[3]
(2)~identifies and eagerly resolves transaction conflicts; and
\mbox{(3) delays} non-crit\-i\-cal clean\-up until transaction commit.

\subsection{API}
\label{sec:api}

Figure~\ref{fig:api} summarizes Medley's API\@.  Using this API, we
transform an NBTC-composable data structure into a transactional
structure as follows:

\lstdefinestyle{myCustomCppStyle}{
  language=C++,
  stepnumber=1,
  tabsize=1,
  showspaces=false,
  showstringspaces=false,
  xleftmargin=0pt,
  basicstyle=\scriptsize\ttfamily\selectfont,
  numberstyle=\ttfamily\tiny,
  escapeinside=||,
  keywordstyle=\bfseries\color{red!40!black},
  commentstyle=\itshape\color{green!40!black},
  columns=fullflexible,
}

\begin{figure}
\begin{lstlisting}[style=myCustomCppStyle,xleftmargin=12pt,numbersep=4pt,numbers=left]
template <class T> class CASObj { // Augmented atomic object
  T nbtcLoad();
  bool nbtcCAS(T expected, T desired, bool linPt, bool pubPt);
  /* Regular atomic methods: */
  T load();  void store(T desired);  bool CAS(T expected, T desired);
};
class Composable {  // Base class of all transactional objects
  template <class T> void addToReadSet(CASObj<T>*,T); // Register load
  void addToCleanups(function); // Register post-critical work
  template <class T> T* tNew(...);     // Create a new block
  template <class T> void tDelete(T*); // Delete a block
  template <class T> void tRetire(T*); // Epoch-based safe retire
  TxManager* mgr; // Tx metadata shared among Composables
  struct OpStarter { OpStarter(TxManager*); } // RAII op starter
};
class TxManager { // Manager shared among composable objects
  void txBegin(); // Start a transaction
  void txEnd();   // Try to commit the transaction
  void txAbort(); // Explicitly abort the transaction
  void validateReads(); // Optional validation for opacity
};
struct TransactionAborted : public std::exception{ };
\end{lstlisting}
  \vspace{-3ex}
  \captionsetup{justification=centering}
  \caption{C++ API of Medley for transaction composition. }
  \vspace{-2ex}
  \label{fig:api}
\end{figure}

\begin{enumerate}
\item
  Replace critical loads and CASes with \code{nbtcLoad} and
  \code{nbtcCAS}, respectively.  Fields to which such accesses are
  made should be declared using the \code{CASObj} template.
\item
  Invoke \code{add\-To\-Read\-Set} for the critical load in a read
  operation, recording the address and the loaded value.
\item
  Register each operation's post-critical work via
  \code{add\-To\-Clean\-ups}.
\item
  Replace every \code{new} and \code{delete} with \code{tNew} and
  \code{tDelete}.  Replace every \code{retire} (for safe memory
  reclamation---SMR) with \code{tRetire}.
\item
  Declare an \code{OpStarter} object at the beginning of each
  operation.
\end{enumerate}

\code{CASObj<T>} augments each CAS-able 64-bit word (e.g.,
\code{atomic<Node*>}) with additional metadata bits for speculation
tracking (details in Section \ref{sec:mcns}).
It provides specialized load and CAS operations, as well as the usual methods
of \code{atomic<T>}. To dynamically identify the speculation interval,
\code{nbtcCAS} takes two extra arguments, \code{linPt} and
\code{pubPt}, that indicate whether this call, should it succeed, will
constitute its operation's linearization or/and publication point.
In a similar vein, \code{add\-To\-Read\-Set} can be called after an
\code{nbtcLoad} to indicate (after inspecting the return value) that
this was (or is likely to have been) the linearizing load of a read-only
operation, and should be tracked for validation at commit time.


\code{Composable} is a base class for transactional objects.
\iftighten
It includes
\else
It provides a variety of NBTC-related methods, including
\fi
support for
\emph{safe memory reclamation} (SMR), used to ensure that nodes are not
reclaimed until one can be certain that no references remain among the
private variables of other threads.  Our current implementation of SMR uses
epoch-based reclamation \cite{fraser-thesis-2004, hart-jpdc-2007,
  mckenney-ols-2001}.
\iftighten\else
For the sake of generality, \code{Composable} also provides an API for
transactional boosting, which can be used to incorporate lock-based
operations into Medley transactions (at the cost, of course, of
nonblocking progress).  We do not discuss this mechanism further in this
paper.
\fi

The \code{TxManager} class manages transaction metadata and provides
methods to initiate, abort, and complete a transaction.  A
\code{TxManager} instance is shared among all \code{Composable}
instances intended for use in the same transactions.  In each operation
call, the manager distinguishes (via \code{OpStarter()}) whether
execution is currently inside or outside a transaction. If outside, all
transactional instrumentation is
elided; if inside, instrumentation proceeds as specified by the NBTC
methodology.

Given that nonblocking operations can execute safely in any
reachable state of the system, there is usually no need to stop the
execution of a doomed-to-abort transaction as soon as a conflict
arises---i.e., to guarantee \emph{opacity}~\cite{guerraoui-ppopp-2008}.
In exceptional cases (e.g., when later operations of a transaction
cannot be called with certain combinations of parameters, or when
aborts are likely enough that delaying them may compromise
performance), the \code{validate\-Reads} method can be used to determine
whether previous reads remain correct.

To illustrate the use of Medley, Figure~\ref{fig:ht_example}
highlights lines of code in Michael's nonblocking hash
table~\cite{michael-spaa-2002} that must be modified for NBTC;
Figure~\ref{fig:tx_example} then shows an example transaction that
modifies two hash tables. In a real application, the \code{catch}
block for \code{TransactionAborted} would typically loop back to the
beginning of the transaction code to try again, possibly with
additional code to avoid livelock (e.g., via backoff
or hints to the underlying scheduler). In contrast to STM systems,
Medley does not instrument the intra-transaction ``glue'' code between
data structure operations.  This code is always executed as regular
code outside a transaction and should always be data-race free; if it
has side effects, the \code{catch} block (written by the programmer)
for aborted transactions should compensate for these before the
programmer chooses to retry or give up.

\begin{figure}
\begin{lstlisting}[style=myCustomCppStyle,xleftmargin=12pt,numbersep=4pt,numbers=left]
class MHashTable |\hightlightCode{: public Composable}| {
struct Node { K key; V val; |\hightlightCode{CASObj<Node*>}| next; };
// from p, find c >= k; |\hightlightCode{nbtcLoad and tRetire}| may be used
bool find(|\hightlightCode{CASObj<Node*>}|* &p, Node* &c, Node* &n, K k); 
optional<V> get(K key) {
  |\hightlightCode{OpStarter starter(mgr);}| |\hightlightCode{CASObj<Node*>*}| prev = nullptr; 
  Node *curr, *next; optional<V> res = {};
  if (find(prev,curr,next,key)) res = curr->val;
  |\hightlightCode{addToReadSet(prev,curr);}|
  return res;
}
optional<V> put(K key, V val) { // insert or replace if key exists
  |\hightlightCode{OpStarter starter(mgr);}|
  |\hightlightCode{CASObj<Node*>*}| prev = nullptr; optional<V> res = {};
  Node *newNode = |\hightlightCode{tNew<Node>}|(key, val), *curr, *next;
  while(true) {
    if (find(prev,curr,next,key)) { // update
      newNode->next.store(curr);
      if (curr->next.|\hightlightCode{nbtcCAS}|(next,mark(newNode)|\hightlightCode{,true,true}|)) {
        res = curr->val;
        |\hightlightCode{auto cleanup = []()}|{
          if (prev->CAS(curr,newNode)) |\hightlightCode{tRetire}|(curr);
          else find(prev,curr,next,key); 
        };
        |\hightlightCode{addToCleanups(cleanup);}| // execute right away if not in tx
        break;
      }
    } else { // key does not exist; insert
      newNode->next.store(curr);
      if (prev->|\hightlightCode{nbtcCAS}|(curr,newNode|\hightlightCode{,true,true}|)) break;
    }
  }
  return res;
}};
\end{lstlisting}
\vspace{-3ex}
\captionsetup{justification=centering}
\caption{Michael's lock-free hash table example (Medley-related parts highlighted).}
\label{fig:ht_example}
\vspace{-2ex}
\end{figure}

\begin{figure}
\begin{lstlisting}[style=myCustomCppStyle,xleftmargin=12pt,numbersep=4pt,numbers=left]
void doTx(MHashTable* ht1, MHashTable* ht2, V v, K a1, K a2) {
  TxManager* mgr=ht1->mgr; assert(mgr==ht2->mgr); 
  try { // transfer `v' from account `a1' in `ht1' to `a2' in `ht2'
    mgr->txBegin();
    V v1 = ht1->get(a1); V v2 = ht2->get(a2); 
    if (!v1.hasValue() or v1.value() < v) mgr->txAbort();
    ht1->put(a1, v1.value() - v); ht2->put(a2, v + v2.valueOr(0));
    mgr->txEnd();
  } catch (TransactionAborted) { /* transaction aborted */ }
}
\end{lstlisting}
\vspace{-3ex}
\captionsetup{justification=centering}
\caption{Transaction example on Michael's hash table.}
\label{fig:tx_example}
\vspace{-2ex}
\end{figure}

\subsection{M-Compare-N-Swap}
\label{sec:mcns}

To execute the critical memory accesses of each transaction
atomically, we employ a software-emulated \emph{M-compare-N-swap}
(MCNS) operation that builds on the double-compare-single-swap (RDCSS) and
multi-word CAS (CASN) of \citet{harris-disc-2002}.  Each transaction
maintains a \emph{descriptor} that contains a read set, a write set,
and a 64-bit triple of thread ID, serial number, and status, as shown
in Figure~\ref{fig:desc_casobj}.  Descriptors are pre-allocated on a
per-thread basis within a \code{TxManager} instance, and are reused
across transactions. A status can be \code{InPrep} (initial state),
\code{InProg} (ready to commit), \code{Committed} (after validation
succeeds when \code{InProg}), or \code{Aborted} (explicitly by another
thread when \code{InPrep} or due to failed validation). 

Each originally 64-bit word at
which a critical memory access may occur is augmented with a 64-bit
counter, together comprising a 128-bit \code{CASObj}. Each critical CAS
installs a pointer to its descriptor in the \mbox{\code{CASObj}} and increments
the counter; upon commit or abort, the descriptor is
uninstalled and the counter incremented again.  We leverage 128-bit CAS
instructions on the x86 to change the original word and the
counter together, atomically.  The counter is odd when \code{CASObj}
contains a pointer to a descriptor and even when it is a real value.

\begin{figure}
\begin{lstlisting}[style=myCustomCppStyle,xleftmargin=12pt,numbersep=4pt,numbers=left]
struct Desc {
  map<CASObj* addr,{uint64 val,cnt}>* readSet; 
  map<CASObj* addr,{uint64 oldVal,cnt,newVal}>* writeSet;
  atomic<uint64> status;//63..50 tid 49..2 serialNumber 1..0 status
  enum STATUS { InPrep=0, InProg=1, Committed=2, Aborted=3 };
};
struct CASObj { atomic<uint128> val_cnt; };
\end{lstlisting}
\vspace{-3ex}
\captionsetup{justification=centering}
\caption{Descriptor and \code{CASObj} structures.}
\label{fig:desc_casobj}
\vspace{-2ex}
\end{figure}

\begin{figure}
\begin{lstlisting}[style=myCustomCppStyle,xleftmargin=12pt,numbersep=4pt,numbers=left]
void TxManager::txBegin() {
  desc->readSet->clear(); desc->writeSet->clear();
  status.store((status.load() & ~3) + 4);
}
T CASObj::nbtcLoad() {
retry:
  {val,cnt} = val_cnt.load();
  if (cnt % 2) { // is descriptor
    if (val == desc) {
      startSpeculativeInterval(); |\label{code:read_speculative}|
      return desc->writeSet[this].newVal; |\label{code:read_own_desc}|
    } else val->tryFinalize(this, {val,cnt});
    goto retry; // until object has real value
  }
  ... /* Record `this' and `cnt' to be added to readSet */|\label{code:record_cnt}|
  return val;
}
void Composable::addToReadSet(CASObj<T>* obj, T val) {
  ... /* Retrieve `cnt' by `obj` */
  mgr->readSet[obj] = {val,cnt}; |\label{code:add_to_read}|
}
bool CASObj::nbtcCAS(T expected,T desired,bool linPt,bool pubPt){
retry:
  {val,cnt} = val_cnt.load();
  if (cnt % 2) { // is descriptor
    if (val != desc) { // not own descriptor
      val->tryFinalize(this, {val,cnt}); 
      goto retry; // until object has real value
    }
    startSpeculativeInterval(); |\label{code:write_speculative1}|
  } else if (val != expected) return false;
  if (pubPt) startSpeculativeInterval(); |\label{code:write_speculative2}|
  if (inSpeculativeInterval()) { // Is critical CAS
    desc->writeSet[this] = {val,cnt,desired}; |\label{code:update_writeset}|
    bool ret = true;
    if (!(cnt % 2)) ret = this->CAS({val,cnt},{desc,cnt+1});
    if (!ret) desc->writeSet.remove(this);
    if (linPt and ret) endSpeculativeInterval(); |\label{code:spec_end}|
    return ret;
  } else return CAS(expected, desired);
}
\end{lstlisting}
\vspace{-3ex}
\captionsetup{justification=centering}
\caption{Pseudocode for installing phase of MCNS.}
\label{fig:install_pseudo}
\vspace{-2ex}
\end{figure}

\begin{figure}
\begin{lstlisting}[style=myCustomCppStyle,xleftmargin=12pt,numbersep=4pt,numbers=left]
bool Desc::stsCAS(uint64 d, STATUS expected, STATUS desired) {
  d = d & ~3; return status.CAS(d + expected, d + desired);
}
bool Desc::setReady(){return stsCAS(status.load(),InPrep,InProg);}
bool Desc::commit(uint64 d){return stsCAS(d,InProg,Committed);}
bool Desc::abort(uint64 d){return stsCAS(d,d & 1,Aborted);}
void Desc::tryFinalize(CASObj* obj, uint128 var) { |\label{code:try_complete}|
  uint64 d = status.load();
  if (obj->val_cnt.load() != var) // ensure d indicates right tx |\label{code:status_recheck}|
    return;
  if (d & 3 == InPrep) {
    abort(d);
    uint64 newd = status.load();
    if (newd & ~3 != d & ~3) return; // serial number mismatch
    d = newd;
  }
  if (d & 3 == InProg) {
    if (validateReads(d)) commit(d); |\label{code:helper_commit}|
    else abort(d);
  }
  uninstall(status.load()); |\label{code:helper_uninstall}|
}
bool Desc::validateReads() {
  for (e:*readSet) 
    if ({e.val,e.cnt} != e.addr->load()) return false; |\label{code:validate_read}|
  return true;
}
void Desc::uninstall(uint64 d) { 
  if (d % 3 == Committed)
    for (e:*writeSet) 
      e.addr->CAS({this,e.cnt+1}, {e.newVal,e.cnt+2}); |\label{code:uninstall_commit}|
  else // Aborted
    for (e:*writeSet) 
      e.addr->CAS({this,e.cnt+1}, {e.oldVal,e.cnt+2}); |\label{code:uninstall_abort}|
} 
struct TxManager {
  threadLocal vector<Function> cleanups, allocs;
  threadLocal Desc* desc;
  void txAbort() { |\label{code:txabort_begin}|
    uint64 d = desc->status.load();
    desc->abort(d); 
    desc->uninstall(d); |\label{code:abort_uninstall}|
    for (f:allocs) f(); // undo tNew |\label{code:dealloc}|
    throw TransactionAborted();
  }
  void txEnd() {
    if (!desc->setReady()) txAbort();
    else {
      uint64 d = desc->status.load();
      if (!desc->validateReads()) desc->abort(d); |\label{code:validate}| 
      else if (d & 3 == InProg) desc->commit(d); |\label{code:owner_commit}|
      d = desc->status.load();
      if (d & 3 == Committed) {
        desc->uninstall(d); |\label{code:commit_uninstall}|
        for (f:cleanups) f(); |\label{code:cleanup}|
      } else txAbort();
    }
  }|\label{code:txend_end}|
};
\end{lstlisting}
\vspace{-3ex}
\captionsetup{justification=centering}
\caption{Pseudocode of methods that finalize
  transactions.}
\label{fig:finalize_pseudo}
\vspace{-2ex}
\end{figure}  

Each instance of MCNS proceeds through phases that
install descriptors, finalize status, and uninstall descriptors.  The
first two phases are on the critical path of a data structure operation.
A new transaction initializes metadata in its
descriptor (at \code{txBegin}): it clears the read and write sets,
increments the serial number, and resets the status to
\code{InPrep}.  The installing phase then occurs over the
course of the transaction:  Each critical load
records its address, counter, and
value in the read set.  Each critical CAS
records its address, old counter, old value, and desired new value
in the write set; it then installs a pointer to the descriptor in the
\code{CASObj}.  Pseudocode for the installing phase appears in
Figure~\ref{fig:install_pseudo}.

To spare the programmer the need to reason about counters,
\code{nbtcLoad} makes a record of
its $\langle$counter, object$\rangle$ pair
(line~\ref{code:record_cnt} in Fig.~\ref{fig:install_pseudo});
\code{addToReadSet} then adds this pair (and the specified 
\code{CASObj}) to the transaction's read set (line~\ref{code:add_to_read}).

When a thread encounters its own descriptor, \code{nbtcLoad} returns
the speculated value from the write set (line~\ref{code:read_own_desc}).
Likewise, \code{nbtcCAS} updates the write entry
(line~\ref{code:update_writeset}).  Such encounters automatically
initiate the speculation interval
(lines~\ref{code:read_speculative}, \ref{code:write_speculative1},
and~\ref{code:write_speculative2}),
which then extends through the linearization point of the
current operation (line~\ref{code:spec_end}).

If an operation encounters the descriptor of some other thread, it gets
that descriptor out of the way by calling \code{tryFinalize}
(Fig.~\ref{fig:finalize_pseudo}).  This method aborts the associated
transaction if the descriptor is \code{InPrep}, helps complete the
commit if \code{InProg}, and in all cases uninstalls the descriptor from
the \code{CASObj} in which it was found.
%
%
Similar actions occur when a thread is forced to abort or reaches the
end of its transaction and attempts to commit
(lines~\ref{code:txabort_begin}--\ref{code:txend_end}).
Whether helping or acting on its own behalf, a thread performing an MCNS
must verify that the descriptor is still responsible for the
\code{CASObj} through which it was discovered
(line~\ref{code:status_recheck}) and (if committing) that
the values in the read set are still valid (line~\ref{code:validate_read}).
After CAS-ing the \code{status} to \code{Committed} or \code{Aborted},
the thread uninstalls the descriptor from all associated \code{CASObj}s,
replacing pointers to the descriptor with the appropriate updated
values (lines~\ref{code:uninstall_commit} and~\ref{code:uninstall_abort}).
Once uninstalling is complete, the owner thread calls cleanup routines
(line~\ref{code:cleanup}) for a commit or deallocates \code{tNew}-ed
blocks (line~\ref{code:dealloc}) for an abort.

Our design adopts invisible readers and eager contention management
for efficiency and simplicity.  Eager contention management admits the
possibility of livelock---transactions that repeatedly abort each
other---and therefore guarantees only obstruction freedom. Lazy
(commit-time) contention management along with some total order of
descriptor installment might allow us to preserve lock freedom for
structures that provide it~\cite{spear-ppopp-2009}, but would
significantly complicate the tracking and retrieving of uncommitted
changes, and would not address starvation, which may be a bigger
problem than livelock in practice;
we consider these implementation choices orthogonal to the
effectiveness of the NBTC methodology, and defer them to future work.

\section{Persistent Memory}
\label{sec:persistence}

Transactions developed, historically, in the database community;
transactional memory (TM) adapted them to in-memory structures in
multithreaded programs.  The advent of cheap, low-power,
byte-addressable nonvolatile memory (NVM) presents the opportunity to merge
these two historical threads in a way that ideally leverages NBTC\@.
Specifically, where TM aims to convert sequential code to
thread-safe parallel code, NBTC assumes---as in the database
world---that we are already in possession of efficient thread-safe
structures and we wish to combine their operations atomically and
durably.  Given this assumption, it seems appropriate (as described at
the end of Sec.~\ref{sec:api}) to assume that the programmer is
responsible for the ``glue'' code between operations, and to focus on
the atomicity and durability of the composed operations.

\subsection{Durable Linearizability}

 
On machines with volatile caches, data structures in NVM will generally
be consistent after a crash only if programs take pains to issue
carefully chosen write-back and fence instructions.  To characterize
desired behavior, Izraelevitz et al.~\cite{izraelevitz-disc-2016}
introduced \emph{durable linearizability} as a correctness criterion for
persistent structures.  A structure is durably linearizable if it is
linearizable during crash-free execution and its long-term history
remains linearizable when crash events are elided.
Equivalently~\cite{friedman-ppopp-2018}, each operation should persist
between its invocation and response, and the order of persists should
match the linearization order.

Many durably linearizable nonblocking data structures have been designed
in recent years~\cite{friedman-ppopp-2018, chen-atc-2020,
  zuriel-oopsla-2019, fatourou-spaa-2019}.  Several groups have also
proposed methodologies by which existing nonblocking structures can be
made durably linearizable~\cite{izraelevitz-disc-2016,
  friedman-pldi-2020, friedman-pldi-2021}.
Other groups have developed persistent STM systems, but most
of these have been lock-based~\cite{volos-asplos-2011, coburn-asplos-2011,
  correia-spaa-2018, liu-asplos-2017}.
OneFile~\cite{ramalhete-dsn-2019} and QSTM~\cite{beadle-qstm-2020} are,
to the best of our knowledge, the only nonblocking persistent STM
systems.  OneFile serializes transactions using a global sequence
number, eliminating the need for a read set and improving read
efficiency, but introducing the need for invasive data structure
modifications and a 128-bit wide CAS\@.  QSTM employs a global
persistent queue for active transactions, avoiding the need for wide
CAS and invasive structural changes, but with execution that remains
inherently serial.

\subsection{Lowering Persistence Overhead}

Unfortunately, write-back and fence instructions tend to have high
latency.  Given the need for operations to persist before returning,
durable linearizability appears to be intrinsically expensive.
Immediate persistence for STM introduces additional overhead, as
metadata for transaction concurrency control must also be eagerly
written back and fenced.

To move high latency instructions off the application's critical path,
Izrael\-evitz et al.~\cite{izraelevitz-disc-2016} introduced the notion of
\emph{buffered} durable linearizability (BDL).  By allowing a modest
suffix of pre-crash execution to be lost during post-crash recovery (so
long as the overall history remains linearizable), BDL allows write-back
and fence instructions to execute in batches, off the application's
critical path.  Applications that need to ensure persistence before
communicating with the outside world can employ a \code{sync} operation,
reminiscent of those in traditional file systems and databases.

First proposed in the context of the Dal\'{i} persistent hash
table~\cite{nawab-disc-2017}, \emph{periodic persistence} was subsequently
adopted by nbMontage~\cite{cai-disc-2021}, a general-purpose system to
create BDL versions of existing nonblocking structures.  The nbMontage
system divides wall-clock time into ``epochs''
and persists operations in a batch at the end of each epoch.
In the wake of a crash in epoch $e$, the system recovers all structures
to their state as of the end of epoch $e-2$.
To maximize throughput in the absence of crashes, nbMontage also
distinguishes between data that are semantically significant
(a.k.a.\,``payloads'') and data that are merely performance enhancing
(e.g., indices); the latter can be kept in DRAM and rebuilt during
recovery.  As an example, the payloads of a mapping are simply a pile of
key-value pairs; the associated hash table, tree,
or skiplist resides in transient DRAM.  The payloads of a queue
are $\langle$serial number, item$\rangle$ pairs.

To ensure that post-crash recovery always reflects a consistent state of
each structure, every nbMontage operation is forced to linearize in the
epoch with which its payloads have been labeled.  Operations that take
``too long'' to complete may be forced to abort and start over.
The nbMontage system as a whole is lock free; \code{sync} is actually
wait free.

\subsection{Durable Strict Serializability}

Linearizability, of course, is not suitable for transactions, which must
remain speculative until all operations can be made
visible together.  STM systems typically provide strict
serializability instead: transactions in a crash-free history appear
to occur in a sequential order that respects real time (if
$A$ commits before $B$ begins, then $A$ must serialize
before $B$)~\cite[Sec.\ 3.1.2]{scott-sms-2013}.  For a persistent
version of NBTC, we need to accommodate crashes.


Like \citet{izraelevitz-disc-2016}, we assume a full-system crash
failure model: data structures continue to exist after a crash, but are
accessed only by new threads---the old threads disappear.  Under this model:

\begin{definition}
An execution history $H$ displays \emph{durable strict
  serializability} (DSS) if it is strictly serializable when crash events
are elided.
\end{definition}

\noindent
Like durable linearizability, this definition requires all work
completed before a crash to be visible after the crash.  The buffered
analogue is similar:

\begin{definition}
An execution history $H$ displays \emph{buffered durable strict
  serializability} (BDSS) if there exists a happens-before--consistent cut
of each inter-crash interval such that $H$ is strictly serializable
when crash events are elided \emph{along with the post-cut suffix of
each inter-crash interval}.
\end{definition}



\subsection{Merging Medley with nbMontage}

The epoch system of nbMontage provides a natural mechanism with which to
provide failure atomicity and durability for Medley transactions: if
operations of the same transaction always occur in the same epoch, then
they will be recovered (or lost) together in the wake of a crash.
%
%
Building on this observation, we merge the two systems to create
\emph{txMontage}.  Payloads of all operations in a given transaction
are labeled with the same epoch number.  That number is then validated
along with the rest of the read set during MCNS commit, ensuring that
the transaction commits in the expected epoch.  While nbMontage itself
is quite complex, this one small change is all that is required to
graft it (and all its converted persistent data structures) onto
Medley: persistence comes into transactions ``almost for free.''

\renewcommand{\proofname}{Proof (sketch)}

\section{Correctness}
\label{sec:proof}

In this section, we argue that histories comprising well-formed
Medley transactions are strictly serializable, that Medley is obstruction
free, and that txMontage provides buffered durable strict
serializability.

\begin{definition}
  \label{def:tx}
A Medley transaction is \emph{well-formed} if
\begin{enumerate}
\item it starts with \code{txBegin} and ends with \code{txEnd},
  optionally with \code{txAbort} in between;
\item it contains operations of NBTC-transformed data structures; and
\item all other intra-transaction code is nonblocking 
  and free from any side effects not managed by
  handlers for the \code{TransactionAborted} exception.
\end{enumerate}
\end{definition}

\subsection{Strict Serializability}
\label{sec:ss}

\begin{lem}\label{lemma:mcns}
  At the implementation level
  \iftighten\else (operating on the array of words that
  comprises system memory)\fi, \code{nbtcLoad}, \code{nbtcCAS},
  \code{tryFinalize}, \code{txAbort}, and \code{txEnd} (MCNS) are
  linearizable operations.
\end{lem}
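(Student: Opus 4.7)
My plan is to treat each \code{CASObj} as an abstract register whose logical value at any instant is determined by reading its \code{val\_cnt} field and then, if it currently holds a descriptor pointer, consulting that descriptor's \code{status}: the logical value is the \code{oldVal} from the write-set entry while the status is \code{InPrep}, \code{InProg}, or \code{Aborted}, and the \code{newVal} once the status has been CAS-ed to \code{Committed}. With this definition, the proof proceeds by identifying, for each of the five operations, a single instruction whose execution causes the abstract value of the relevant \code{CASObj}(s) to change (or to be observed) and arguing that this instruction lies between the operation's invocation and response.

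First I would handle the easy cases. For \code{nbtcLoad} on a \code{CASObj} whose current \code{val\_cnt} contains a real value (even counter), the linearization point is the atomic 128-bit load of \code{val\_cnt}. For the sub-case where the loaded descriptor is the caller's own, the linearization is the read of the local \code{writeSet} entry, which references a location that no other thread modifies. For \code{nbtcCAS} executed outside the speculation interval, the linearization point is the underlying 128-bit CAS. For \code{nbtcCAS} inside the speculation interval (the installing case), the linearization point is the successful 128-bit CAS that exchanges \code{\{expected,cnt\}} for \code{\{this,cnt+1\}}; after this point the abstract value is still \code{oldVal} (status is \code{InPrep}), matching the semantics of a write that has not yet taken effect. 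For \code{txAbort}, the linearization point is the \code{stsCAS} that transitions \code{status} from \code{InPrep} (or \code{InProg}) to \code{Aborted}: this does not change the abstract value of any \code{CASObj} (they still resolve to \code{oldVal}), and the subsequent uninstall CASes are semantically invisible replacements of \code{\{this,cnt+1\}} with \code{\{oldVal,cnt+2\}}. For \code{txEnd}, the linearization point is the \code{stsCAS} from \code{InProg} to \code{Committed} (on the successful path): at that single instant every installed \code{CASObj} atomically flips from its \code{oldVal} to its \code{newVal} abstract value, realizing MCNS's multi-word atomicity.

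The subtle case is \code{tryFinalize}, because any thread may help. Its ``effect'' is whichever of the two \code{stsCAS} calls (to \code{Aborted} or \code{Committed}) actually succeeds; if neither succeeds, the method linearizes trivially, because the rechecks on lines~\ref{code:status_recheck} and~\ref{code:helper_commit} ensure the helper only attempts state transitions that the owner has not already performed, and the serial-number match inside \code{stsCAS} ensures that any successful transition is against the descriptor generation the helper actually observed. The final \code{uninstall} CASes (whether run by owner or helper) each preserve the abstract state by construction, because they only ever swap the descriptor pointer for the value dictated by the committed/aborted status. Thus helpers contribute at most a single status transition per invocation and never change abstract state twice.

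The main obstacle, and the step I would write most carefully, is ruling out two families of bad interleavings around the descriptor pointer. The first is ABA on \code{val\_cnt}: a helper could read \code{\{desc,cnt\}}, be preempted while another transaction reuses \code{desc}, and then attempt \code{stsCAS} or \code{uninstall} against the wrong generation; I would use the monotonically incrementing counter (odd while a descriptor is installed, bumped again on uninstall) together with the serial number embedded in \code{status} to show that any successful CAS performed by a helper corresponds to exactly the generation it observed. The second is a race between the owner's commit inside \code{txEnd} and a helper's \code{tryFinalize}: I would argue that both paths funnel through the same \code{stsCAS}, so at most one linearizing transition to \code{Committed}/\code{Aborted} occurs per serial number, and the resulting abstract flip of all write-set \code{CASObj}s is instantaneous and unique. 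Once these two invariants are established, the overall history of operations projects onto a linear sequence of atomic state transitions, which is exactly linearizability of the MCNS primitives.
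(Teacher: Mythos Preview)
Your abstract-value framework and most of the case analysis are fine; the gap is in where you linearize a \emph{successful} \code{txEnd} (or helping \code{tryFinalize}). You put it at the \code{stsCAS} to \code{Committed}, because that is where every descriptor-bearing \code{CASObj} flips from \code{oldVal} to \code{newVal}. That accounts for the $N$ swaps, but MCNS is $M$-compare-$N$-swap, and the $M$ read-set locations carry no descriptor---readers are invisible. Between the \code{validateReads} loop and the status CAS, nothing prevents another transaction from installing on a read-set location, committing, and uninstalling; at your chosen instant that location need not hold the value this MCNS ``compared.'' A successful \code{txEnd} whose compare component is part of its specification therefore cannot linearize at the status CAS.

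The paper resolves this exactly as Harris et al.\ do for RDCSS: a successful \code{txEnd}/\code{tryFinalize} linearizes \emph{in the past}, at the first of its validating reads (and an unsuccessful one at the failing validating read or the failing status CAS). At the first-read instant the status is already \code{InProg}, every write-set \code{CASObj} is held by the descriptor (abstract value still \code{oldVal}), and every read-set entry whose validation is going to pass has---by the counter check---been unchanged over the whole interval from its original \code{nbtcLoad} through its validation, hence in particular at that instant. The paper even remarks on the irony that MCNS thus lacks an immediately identifiable linearization point. Your counter/serial-number reasoning about ABA and helper races is still needed and still correct; you just have to slide the linearization point for the committed case back from the status CAS to the first validating read.
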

\begin{proof}
Follows directly from \citet{harris-disc-2002}.
Their RDCSS compares (without changing) only a single location, and
their CASN supports the update of \emph{all} touched words, but the
proofs adapt in a straightforward way.
In particular, as in RDCSS, an unsuccessful \code{tryFinalize} or
\code{txEnd} can linearize on a
(failed) validating read or a failed CAS of its \code{status} word.
A \code{tryFinalize} or \code{txEnd} whose \code{status} CAS is
successful linearizes ``in the past,'' on the first of its validating
reads.
\iftighten\else(Ironically, this means that MCNS, at the implementation
level, does not have an immediately identifiable linearization point.)\fi
\end{proof}

\begin{lem}\label{lemma:interleaving}
  In any history in which transaction $t$ performs an \code{nbtcLoad} or
  \code{nbtcCAS} operation $x$ on \code{CASObj} $o$, and in which $t$'s
  \code{txEnd} operation $y$ succeeds, no \code{tryFinalize} or
  \code{txEnd} for a different transaction that modifies $o$ succeeds
  between $x$ and $y$.
\end{lem}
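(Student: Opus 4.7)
The plan is to rule out interleaving modifications by case analysis on whether $x$ is an \code{nbtcLoad} or a critical \code{nbtcCAS}, leaning on the two mechanisms built into Medley: the parity-tagged version counter carried by every \code{CASObj}, and the descriptor-installation protocol used by critical CAS. Throughout, I take $x$ to be a \emph{critical} operation; for a non-critical \code{nbtcCAS}, $x$ degenerates to an ordinary CAS that sits outside any speculation interval and does not contribute to $y$'s commit, so the claim is vacuous in that subcase.

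For the load case, \code{nbtcLoad} records the observed pair $\langle \mathrm{val}, \mathrm{cnt}\rangle$ in $t$'s read set via \code{addToReadSet}. By inspection of Figures~\ref{fig:install_pseudo} and~\ref{fig:finalize_pseudo}, every successful mutation of a \code{CASObj}---a bare \code{CAS}, a descriptor install inside a critical \code{nbtcCAS}, or a descriptor uninstall inside \code{uninstall}---increments the 64-bit counter. Hence, if any other transaction's \code{tryFinalize} or \code{txEnd} modifies $o$ between $x$ and $y$, then at $y$'s validating read the pair at $o$ will differ from $\langle \mathrm{val}, \mathrm{cnt}\rangle$, \code{validateReads} will return false, and $y$ will invoke \code{txAbort} instead of committing, contradicting the hypothesis that $y$ succeeds.

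For the CAS case, $x$ either installs a pointer to $t$'s descriptor at $o$ with an odd counter or encounters $t$'s own already-installed descriptor (only updating the write set). Either way, from that moment until $t$'s descriptor is uninstalled, $o$ contains $\langle t.\mathrm{desc}, \mathrm{cnt}\rangle$. For a distinct transaction $t'$ to change $o$, its thread's CAS must interact with this descriptor: because the counter is odd and any external \code{expected} value is a non-descriptor payload, the thread's \code{nbtcCAS} or \code{nbtcLoad} is forced into the helper branch and invokes \code{tryFinalize} on $t$'s descriptor. By Lemma~\ref{lemma:mcns} this call is linearizable, and its only legal effects on $t$ are to abort $t$ (while \code{InPrep}) or to help commit $t$ (while \code{InProg}). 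An abort propagates back via the \code{status} check in \code{txEnd}, forcing $y$ to fail---again a contradiction.

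The main obstacle is the helper-commit subcase: when $t'$'s thread commits $t$'s descriptor and then uninstalls it from $o$, I must argue that this uninstalling store is not an independent modification ``between $x$ and $y$,'' but rather a part of $y$'s own commit sequence. The per-instance serial number inside \code{status} together with the at-most-once semantics of the \code{status} \code{stsCAS} make the commit event unique across owner and helpers; identifying that unique event as $y$'s linearization point then follows from the same reasoning used in Lemma~\ref{lemma:mcns}. With that identification, any helper-driven uninstall that follows the commit happens \emph{after} $y$ in linearization order rather than strictly between $x$ and $y$, which completes the argument.
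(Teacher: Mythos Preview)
Your case split (load vs.\ critical CAS) differs from the paper's, which instead fixes a hypothetical conflicting transaction $u$ and splits on whether $u$'s \code{nbtcCAS} of $o$ occurred before or after $x$. Your load case is sound and corresponds to the paper's read-validation branch. In the CAS case, however, there is a gap: your argument assumes throughout that $t'$'s \code{nbtcCAS} on $o$ comes \emph{after} $x$, so that $t'$ must go through $t$'s already-installed descriptor. You never handle the case in which $t'$ installed its descriptor on $o$ \emph{before} $x$; there it is $x$ that encounters $t'$'s descriptor and calls \code{tryFinalize} on $t'$, finalizing $t'$ so that $t'$'s own \code{txEnd} cannot succeed strictly after $x$. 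The paper treats this case explicitly; your sentence ``$x$ either installs \ldots\ or encounters $t$'s own already-installed descriptor'' silently skips over it.

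Your helper-commit paragraph is also somewhat misaimed. The lemma excludes a \code{tryFinalize} or \code{txEnd} \emph{for a different transaction}, i.e., one that finalizes $t'$'s descriptor. A helper thread that commits $t$'s descriptor is executing a \code{tryFinalize} for $t$, not for a different transaction, so the uninstall of $t$'s descriptor from $o$ is not the event you need to rule out. The useful kernel of your observation is right, though: once any thread CAS-es $t$'s status to \code{Committed}, $y$ has linearized (per Lemma~\ref{lemma:mcns}), and any subsequent install-and-commit by $t'$ on $o$ therefore falls after $y$ rather than between $x$ and $y$. You have simply attributed the obstacle to the wrong event.
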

\begin{proof}
    Suppose the contrary, and call the transaction with the
    conflicting \code{tryFinalize} or \code{txEnd} $u$.  If $u$'s
    \code{nbtcCAS} of $o$ occurs between $x$ and $y$, it will abort and
    uninstall $t$'s descriptor, or cause read validation to fail in $y$, 
    contradicting the assumption that
    $t$'s \code{txEnd} succeeds.  If $u$'s \code{nbtcCAS} of $o$ occurs
    before $x$, then $x$ will abort and uninstall $u$'s descriptor,
    contradicting the assumption that $u$'s \code{tryFinalize} or
    \code{txEnd} succeeds after $x$.
\end{proof}

\begin{thm}
\label{thm:ss}
Histories comprising well-formed Medley transactions are strictly serializable.
\end{thm}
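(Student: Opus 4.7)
The plan is to assign each committed transaction $T$ a serialization point equal to the linearization point of $T$'s \code{txEnd}, which by Lemma~\ref{lemma:mcns} is well-defined (either the successful \code{status} CAS or, if \code{txEnd} linearizes ``in the past,'' the last of its validating reads). Aborted transactions have no visible effects, since the MCNS uninstall phase restores every modified \code{CASObj} to its pre-transaction value and the \code{TransactionAborted} exception prevents any post-critical cleanup from running; they therefore need not appear in the serial order. My task then reduces to checking that this total order over committed transactions (i) respects real time and (ii) is equivalent to the actual concurrent history.

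Real-time order is essentially mechanical. If $T_1$ commits before $T_2$ begins, then $T_1$'s \code{txEnd} linearizes, by Lemma~\ref{lemma:mcns}, before any critical access of $T_2$, hence before $T_2$'s \code{txEnd}; this places $T_1$ before $T_2$ in the serial order, and the ordering with aborted transactions is vacuous.

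The core argument is sequential equivalence. Each NBTC-composable operation has an immediately identifiable linearization point inside its speculation interval (Definition~\ref{def:critical}), and the NBTC methodology ensures every critical instruction executes speculatively through MCNS. I would argue that, viewed from outside $T$, all critical instructions of $T$ appear to take effect atomically at $T$'s serialization point. For a committed $T$: (a) by Lemma~\ref{lemma:interleaving}, no other transaction's \code{tryFinalize} or \code{txEnd} that modifies a \code{CASObj} touched by $T$ succeeds between $T$'s critical access to that object and its \code{txEnd}, so the read set is valid at commit time and the write set's new values all become visible together during uninstall; (b) while $T$'s descriptor sits in a \code{CASObj}, any concurrent access goes through \code{tryFinalize}, which either aborts $T$ (contradicting the committing assumption) or helps $T$ commit and uninstall, so the installed-but-not-yet-committed state is never observed as a stable value by another transaction. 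For a successor operation of $T$ that depends on an earlier operation of the same transaction, reads through the descriptor at lines~\ref{code:read_own_desc} and~\ref{code:update_writeset} deliver the speculated values, preserving in-transaction program order. Glue code, by Definition~\ref{def:tx} clause~3, has no externally visible side effects not handled by abort handlers and thus contributes nothing to the equivalent sequential history.

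The main obstacle I expect is justifying (a) and (b) for operations whose publication point strictly precedes their linearization point, since here $T$'s descriptor becomes visible to helpers before $T$ has even decided to commit. I would address this by appealing to Definition~\ref{def:critical}: every CAS from the publication point through the linearization point lies inside the speculation interval and therefore installs the descriptor. Combined with Lemma~\ref{lemma:interleaving} and the fact that \code{tryFinalize} either validates and commits or aborts, this rules out any history in which another transaction simultaneously observes a ``half-linearized'' effect of $T$ and subsequently commits: any such observation either triggers an abort of $T$ (precluded by the assumption that $T$ commits) or induces a helping commit that is indistinguishable, from outside, from $T$ committing itself at its own serialization point.
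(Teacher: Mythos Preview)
Your argument follows essentially the same route as the paper's proof sketch: both hinge on Lemma~\ref{lemma:interleaving} to establish that no conflicting transaction can commit between a committed transaction's critical accesses and its own \code{txEnd}, from which contiguity of each transaction's operations in an equivalent sequential history follows; you simply spell out more of the details (aborts, real-time order, publication points, intra-transaction dependencies) than the paper's brief sketch does. One small slip: Lemma~\ref{lemma:mcns} places a successful \code{txEnd}'s linearization point at the \emph{first} of its validating reads, not the last---but your argument is insensitive to this choice, since Lemma~\ref{lemma:interleaving} guarantees the read set stays valid throughout the interval.
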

\begin{proof}
In an NBTC-transformed data structure, all critical memory accesses will
be performed using \code{nbtcLoad} or \code{nbtcCAS}.
These will be followed, at some point, by a call to \code{txEnd}.
If that call succeeds, no conflicting \code{tryFinalize} or \code{txEnd}
succeeds in the interim, by Lemma~\ref{lemma:interleaving}.
This in turn implies that our Medley history is equivalent to a sequential
history in which each operation takes effect at the \code{nbtcLoad} or
\code{nbtcCAS} corresponding to the linearization point of the original
data structure operation, prior to NBTC transformation.  Moreover, all
operations of the same transaction are contiguous in this sequential
history---that is, our Medley history is strictly serializable.
%
%
%
\end{proof}

\subsection{Obstruction Freedom}
\begin{thm}
When used to build well-formed transactions that retry on abort, Medley
is obstruction free.
\end{thm}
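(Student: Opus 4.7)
The plan is to show that whenever a thread $t$ executing a well-formed Medley transaction runs in isolation from some point forward, its transaction (possibly after one final retry) completes in a bounded number of steps. I would structure the argument in two layers: first, that the low-level MCNS primitives terminate in isolation; second, that the full transactional execution, including the retry-on-abort wrapper, terminates.

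For the MCNS layer, I would show that \code{nbtcLoad}, \code{nbtcCAS}, \code{tryFinalize}, \code{txAbort}, and \code{txEnd} each take a bounded number of steps in isolation. Both \code{nbtcLoad} and \code{nbtcCAS} contain retry loops whose only back-edges are triggered by encountering a foreign descriptor in the target \code{CASObj}; each iteration dispatches that descriptor via \code{tryFinalize}, which performs at most one status-CAS, one read-set validation, and a full uninstall over the descriptor's write set, all of which are bounded by the (finite) size of the foreign transaction's metadata. In isolation, no thread other than $t$ takes steps, so no new descriptor can be installed on any \code{CASObj} after isolation begins. Hence each retry strictly reduces the finite pool of foreign descriptors residing on the target word, and the loop terminates.

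For the transaction layer, I would invoke Definition~\ref{def:tx}: the constituent data-structure operations are NBTC-composable (and hence nonblocking, so they complete when run in isolation) and the intra-transaction glue code is required to be nonblocking and side-effect-safe. Thus the sequence between \code{txBegin} and \code{txEnd} runs to completion. At commit time, Lemma~\ref{lemma:interleaving} tells us that a conflicting \code{tryFinalize} or \code{txEnd} from another transaction would be required to abort $t$; in isolation no such action occurs, so $t$'s \code{txEnd} succeeds, uninstalls $t$'s descriptor, and runs the (bounded, nonblocking) registered cleanups.

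The main obstacle is ensuring that the ``retry on abort'' wrapper cannot loop forever even when isolation begins during a doomed attempt. If at the moment isolation begins $t$ holds a descriptor that some previous concurrent thread had already CAS-aborted, $t$'s next call into MCNS will observe the \code{Aborted} status, raise \code{TransactionAborted}, and enter the \emph{single} subsequent retry. By the first two paragraphs, this retry begins from a quiescent state, dispatches the (finite) set of foreign descriptors left behind on the \code{CASObj}s it touches, performs its operations, and commits without interference. Consequently, at most one abort-and-retry cycle occurs after isolation, and $t$ commits in a bounded number of steps, establishing obstruction freedom.
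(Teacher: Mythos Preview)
Your proposal is correct and follows essentially the same route as the paper's (very brief) sketch: in isolation, every foreign descriptor encountered by \code{nbtcLoad}/\code{nbtcCAS} is finalized and uninstalled, the current attempt may abort once due to pre-isolation interference, and the subsequent fresh attempt commits without contention. One small imprecision worth fixing: Lemma~\ref{lemma:interleaving} gives the implication ``\code{txEnd} succeeds $\Rightarrow$ no conflicting \code{tryFinalize}/\code{txEnd} intervened,'' whereas you invoke its \emph{converse} (``no conflict $\Rightarrow$ \code{txEnd} succeeds''); that converse is true but should be argued directly from the code---in isolation no other thread can CAS the status to \code{Aborted} and read validation sees unchanged values---rather than cited from the lemma.
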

\begin{proof}
In any reachable system state, if one thread continues to execute while
others are paused, every \code{nbtcLoad} or \code{nbtcCAS} that
encounters a conflict will first finalize (commit or abort) the
encountered descriptor, uninstall it, and install its own descriptor. If
the thread encounters its own descriptor, a \code{nbtcLoad} will return
the speculated value and a \code{nbtcCAS} will update the write set if
the argument matches the previous new value in the write set.  In either
case, the MCNS will make progress.  If it eventually aborts, it may
repeat one round of a brand new MCNS which, with no newly introduced
contention, must commit.
\end{proof}

\subsection{Buffered Durable Strict Serializability}
\begin{thm}
\label{thm:pp}
Histories comprising well-formed txMontage transactions exhibit buffered
durable strict serializability.
\end{thm}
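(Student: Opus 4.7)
The plan is to build directly on top of Theorem~\ref{thm:ss} (strict serializability of Medley in a crash-free interval) and then exploit the epoch mechanism inherited from nbMontage to cut each inter-crash interval at a transaction-respecting boundary. First, I would restrict attention to a single inter-crash interval; within such an interval every txMontage transaction is also a well-formed Medley transaction, so Theorem~\ref{thm:ss} gives a serialization order that respects real time, and each transaction takes effect at its successful \code{txEnd}.

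Next, I would define the cut. If the interval ends with a crash occurring in epoch $e$, place the cut at the boundary between epochs $e-2$ and $e-1$, matching exactly the state to which nbMontage's recovery restores the payloads. Every transaction whose \code{txEnd} commits in an epoch $\leq e-2$ lies above the cut; every transaction committing in a later epoch lies in the suffix that we elide along with the crash event itself. The heart of the argument is then to show that this cut never bisects a transaction: because txMontage tags the payloads of all of a transaction's operations with one epoch number and folds that epoch check into the read-set validation of MCNS, a \code{txEnd} can succeed only if all of its operations linearize in the intended epoch; otherwise validation fails and the entire transaction aborts. Hence, for each transaction, either all constituent operations sit above the cut or none do.

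I would then check happens-before consistency. Epoch boundaries form a global chronological prefix of the execution, and by Lemma~\ref{lemma:mcns} and Theorem~\ref{thm:ss} each transaction linearizes at a single moment lying inside exactly one epoch; so no discarded transaction can happen-before a retained one, and retained invocations either have matching retained responses or are simply absent (because their owners crashed). Splicing the retained prefixes of all inter-crash intervals together and eliding the crash events, the remaining history is a sequence of whole transactions ordered consistently with real time. The serialization argument from Theorem~\ref{thm:ss} then lifts verbatim to the reduced history, giving strict serializability and therefore BDSS.

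The main obstacle I expect is the middle step: pinning down, at the implementation level, that the epoch read really is validated by the same \code{status}-CAS that linearizes the transaction, so that a commit that slips into the wrong epoch is guaranteed to be detected and aborted. This requires treating the current epoch number as an additional entry in the read set and invoking Lemma~\ref{lemma:interleaving} to rule out a concurrent epoch advance between the epoch read and the committing CAS; the rest of the proof is largely bookkeeping on top of Theorem~\ref{thm:ss}.
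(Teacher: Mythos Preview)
Your approach is essentially the paper's, just worked out in far more detail: the paper's proof is a four-line sketch that (i) notes each transaction reads an epoch $e$ at \code{txBegin} and validates it during MCNS commit, (ii) appeals to Lemma~\ref{lemma:mcns} to conclude the MCNS linearizes inside $e$, and (iii) invokes nbMontage's per-epoch failure atomicity to finish. Your explicit construction of the cut at the $e{-}2$/$e{-}1$ boundary, the argument that no transaction is bisected, and the happens-before check are all the unpacking of step~(iii) that the paper leaves implicit by delegating to nbMontage.

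One small technical slip in your ``main obstacle'' paragraph: Lemma~\ref{lemma:interleaving} is stated for \code{CASObj}s modified by \emph{other transactions'} \code{tryFinalize}/\code{txEnd}, whereas the epoch counter is advanced by nbMontage's epoch machinery, not by a peer transaction. The right hook is Lemma~\ref{lemma:mcns} (linearizability of MCNS): once the epoch number sits in the read set, a successful \code{txEnd} linearizes at a point where that read still validates, hence inside epoch $e$. With that substitution your argument goes through and matches the paper's.
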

\begin{proof}
Each transaction reads the current epoch, $e$, in \code{txBegin}.
It then validates this epoch number during MCNS commit.
Per Lemma~\ref{lemma:mcns}, this MCNS must linearize
inside $e$. With nbMontage-provided failure atomicity of all
operations in the same epoch, the theorem trivially holds.
\end{proof}

\section{Performance Results}
\label{sec:exp}

As noted in Section~\ref{sec:intro}, we have used Medley to create NBTC
versions of Michael and Scott’s queue~\cite{michael-podc-1996}, Fraser's
skiplist~\cite{fraser-thesis-2004}, the rotating skiplist of Dick et
al.~\cite{dick-rotating-2016}, Michael's chained hash
table~\cite{michael-spaa-2002}, and Natarajan and Mittal's binary search
tree~\cite{natarajan-ppopp-2014}.  All of the transformations were
straightforward.
In this section we report on the performance of Medley and txMontage
hash tables and skiplists, comparing them to various alternatives from the
literature.

Specifically, we tested the following transient systems:
\begin{description}
\item[Medley --] as previously described (hash table and skip list)
\item[OneFile --] transient version of the lock-free STM of
  Ramalhete et al.~\cite{ramalhete-dsn-2019} (hash table and skip list)
\item[TDSL --] transactional data structure library of Spiegelman et
al.~\cite{spiegelman-2016-pldi} (authors' skiplist only)
\item[LFTT --] lock-free transactional transform of Zhang et
al.~\cite{zhang-spaa-2016} (authors' skiplist only)
\end{description}

We also tested the following persistent systems:
\begin{description}
\item[txMontage --] Medley + nbMontage (hash table and skiplist)
\item[POneFile --] persistent version of
  OneFile~\cite{ramalhete-dsn-2019} (hash table and skiplist)
\end{description}


\subsection{Experimental Setup}

We report throughput for hash table and skiplist microbenchmarks and for
skiplists used to run a subset of TPC-C~\cite{tpc-tpcc-2010}.  We also
measure latency for skiplists.  All code will be made publicly available
prior to conference publication.

All tests were conducted on a Linux 5.3.7 (Fedora 30) server with two
Intel Xeon Gold 6230 processors. Each socket has 20 physical cores and
40 hyperthreads, totaling 80 hyperthreads. Threads in all experiments
were pinned first one per core on socket 0, then on the extra
hyperthreads of that socket, and then on socket 1.  Each socket has 6
channels of 32 GB DRAMs and 6 channels of 128 GB Optane DIMMs. We mount
NVM from each socket as an independent ext4 file system. In all
experiments, DRAM is allocated across the two sockets according to
Linux's default policy; in persistent data structures, only NVM on
socket 0 is used, in direct access (DAX) mode.  In all cases, we report the
average of three trials, each of which runs for 30 seconds.


Our throughput and latency microbenchmark begins by pre-loading the
structure with 0.5\,M key-value pairs, drawn from a key space of 1\,M
keys.  Both keys and values are 8-byte integers.  In the benchmarking
phase, each thread composes and executes transactions comprising 1 to 10
operations each.  Operations (on uniformly random keys) are chosen among
\code{get}, \code{insert}, and \code{remove} in a ratio specified as a
parameter (0:1:1, 2:1:1, or 18:1:1 in our experiments).

In OneFile, we use a sequential chained hash table parallelized
using STM\@.  In Medley, we use an NBTC-transformed version of Michael's
lock-free hash table~\cite{michael-spaa-2002}. Each table has 1\,M
buckets.
In OneFile and TDSL, skiplists are derived from Fraser's STM-based
skiplist~\cite{fraser-thesis-2004}. In LFTT and
Medley, they are derived from Fraser's CAS-based nonblocking
skiplist~\cite{fraser-thesis-2004}. Each skiplist has up to 20 levels.

For TPC-C, we are limited by the fact that Fraser's skiplists do not
support range queries.  Following the lead of \citet{yu-vldb-2014}
\iftighten\else in their experiments with DBx1000\fi \cite{yu-vldb-2014}, we
limit our experiments to TPC-C's \code{newOrder} and \code{payment}
transactions\iftighten\else, which we perform in a 1:1 ratio\fi.
These are the dominant transactions in the benchmark; neither
performs a range query.  


\subsection{Throughput (Transient)}
\label{sec:micro}

\ifarxiv

\begin{figure*}
    \strut\hfill
    \subfloat[get:insert:remove
    0:1:1]%
        {\includegraphics[width=.32\textwidth]%
            {fig/hashtables_g0i50r50_thread}%
        \vspace{-1ex}%
        }
    \hfill
    \subfloat[get:insert:remove
    2:1:1]%
        {\includegraphics[width=.32\textwidth]%
            {fig/hashtables_g50i25r25_thread}%
        \vspace{-1ex}%
        }
    \hfill
    \subfloat[get:insert:remove
    18:1:1]%
        {\includegraphics[width=.32\textwidth]%
            {fig/hashtables_g90i5r5_thread}
        \vspace{-1ex}
        }
    \hfill\strut
    \vspace{-2ex}%
    \caption{Throughput of transactional hash tables (log $Y$
      axis).}
    \vspace{-1ex}%
    \label{fig:throughput-hts}
\end{figure*}

\begin{figure*}
    \strut\hfill
    \subfloat[get:insert:remove
    0:1:1]%
        {\includegraphics[width=.32\textwidth]%
            {fig/skiplists_g0i50r50_thread}%
        \vspace{-1ex}%
        }
    \hfill
    \subfloat[get:insert:remove
    2:1:1]%
        {\includegraphics[width=.32\textwidth]%
            {fig/skiplists_g50i25r25_thread}%
        \vspace{-1ex}%
        }
    \hfill
    \subfloat[get:insert:remove
    18:1:1]%
        {\includegraphics[width=.32\textwidth]%
            {fig/skiplists_g90i5r5_thread}
        \vspace{-1ex}
        }
    \hfill\strut
    \vspace{-2ex}%
    \caption{Throughput of transactional skiplists (log $Y$
      axis).}
    \vspace{-1ex}%
    \label{fig:throughput-sls}
\end{figure*}

\else
\begin{figure*}
    \strut\hfill
    \subfloat[get:insert:remove
    0:1:1]%
        {\includegraphics[width=.3\textwidth]%
            {fig/hashtables_g0i50r50_thread}%
        \vspace{-1ex}%
        }
    \hfill
    \subfloat[get:insert:remove
    2:1:1]%
        {\includegraphics[width=.3\textwidth]%
            {fig/hashtables_g50i25r25_thread}%
        \vspace{-1ex}%
        }
    \hfill
    \subfloat[get:insert:remove
    18:1:1]%
        {\includegraphics[width=.3\textwidth]%
            {fig/hashtables_g90i5r5_thread}
        \vspace{-1ex}
        }
    \hfill\strut
    \vspace{-2ex}%
    \caption{Throughput of transactional hash tables (log $Y$
      axis).}
    \vspace{-1ex}%
    \label{fig:throughput-hts}
\end{figure*}

\begin{figure*}
    \strut\hfill
    \subfloat[get:insert:remove
    0:1:1]%
        {\includegraphics[width=.3\textwidth]%
            {fig/skiplists_g0i50r50_thread}%
        \vspace{-1ex}%
        }
    \hfill
    \subfloat[get:insert:remove
    2:1:1]%
        {\includegraphics[width=.3\textwidth]%
            {fig/skiplists_g50i25r25_thread}%
        \vspace{-1ex}%
        }
    \hfill
    \subfloat[get:insert:remove
    18:1:1]%
        {\includegraphics[width=.3\textwidth]%
            {fig/skiplists_g90i5r5_thread}
        \vspace{-1ex}
        }
    \hfill\strut
    \vspace{-2ex}%
    \caption{Throughput of transactional skiplists (log $Y$
      axis).}
    \vspace{-1ex}%
    \label{fig:throughput-sls}
\end{figure*}
\fi

Throughput results for the hash table and skiplist microbenchmarks
appear in Figures~\ref{fig:throughput-hts} and~\ref{fig:throughput-sls},
respectively.
Solid lines represent transactions on transient data structures;
dotted lines represent persistent transactions.
Considering only the transient case for now, Medley consistently
outperforms the transient version of OneFile by more than an order of
magnitude, on both hash tables and skiplists, for anything more than a
trivial number of threads.
The gap becomes larger when the workload has a higher percentage of
writes.  Despite its lack of scalability, OneFile performs well at small
thread counts,
especially with a read-mostly workload.
We attribute this fact to its serialized transaction design, which
eliminates the need for read sets.

As described in Section~\ref{sec:intro}, TDSL provides (blocking)
transactions over various specially constructed data structures.  While
conflicts still occur on writes, read sets are limited to only
semantically critical nodes, and the authors report significant
improvements in throughput relative to general-purpose
STM~\cite{spiegelman-2016-pldi}.  As shown in
Figure~\ref{fig:throughput-sls}, however, TDSL, like OneFile, has
limited scalability, and is dramatically outperformed by Medley.
Somewhat to our surprise, TDSL also fails to outperform OneFile on this
microbenchmark, presumably because of the latter's elimination
of read sets.

Among the various skiplist competitors, LFTT comes closest to rivaling
Medley, but still trails by a factor of 1.4--2$\times$ in the
\iftighten write-only \else best
(write-only)\fi\ case.  Re-executing entire transactions in LFTT introduces
\iftighten much \else considerable\fi\ redundant work---planning in particular.
On read-mostly workloads, where Medley benefits from invisible readers,
LFTT trails by a factor of 2--2.7$\times$.

As a somewhat more realistic benchmark, we repeated our comparison of
Medley, OneFile, and TDSL on the \code{newOrder} and \code{payment}
transactions of TPC-C\@.
We were unable to include LFTT in these tests because it supports only
static transactions, in which the set of data structure operations is
known in advance---nor could we integrate its dynamic variant
(DTT~\cite{laborde-pmam-2019}), as the available version of the code
does not allow arbitrary key and value types.
\citet{laborde-pmam-2019} report, however, that DTT's performance
is similar to that of LFTT on simple transactions.  Given that DTT has
to publish the entire transaction as a lambda expression on all its
critical nodes, we would expect DTT's performance to be, if anything,
somewhat worse on the large transactions of TPC-C, and LFTT was already
about 2$\times$ slower than Medley on the micro\-benchmark.


TPC-C throughput for Medley, (transient) OneFile, and TDSL appears in
Figure~\ref{fig:tpcc}.  Because transactions on TPC-C
are large, OneFile is impacted severely.  By ensuring the atomicity of
only critical accesses, Medley still scales for large numbers of threads
and outperforms the competition by as much as 45$\times$.

\subsection{Latency (Transient)}
\label{sec:latency}

\ifarxiv
\begin{figure}
    \centering
    \includegraphics[width=2in,height=1.6in]%
        {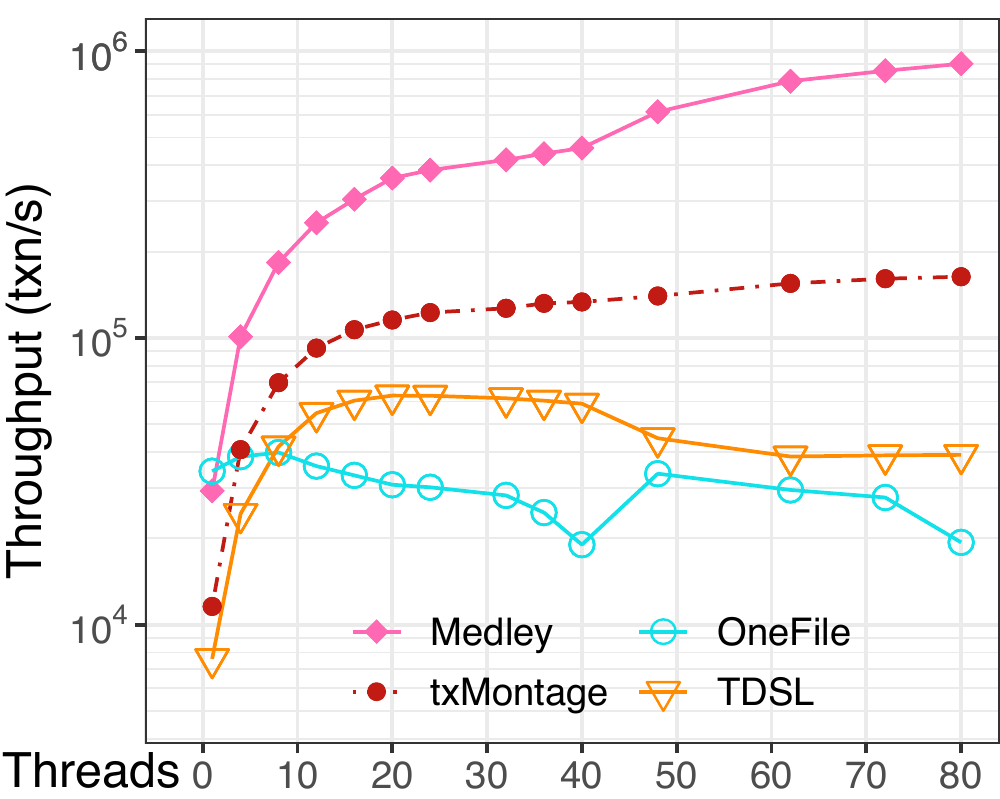}
    \caption{TPC-C skiplist throughput (log $Y$ axis).}
    \vspace{-1ex}
    \label{fig:tpcc}
\end{figure}

\begin{figure}
    \subfloat[on DRAM]%
        {\includegraphics[height=1.45in]%
            {fig/sls_latency_dram} \label{fig:latency_dram}
        \vspace{-1ex}%
        }
    \subfloat[transient on NVM]%
        {\includegraphics[height=1.45in]%
            {fig/sls_latency_nvm}\label{fig:latency_nvm}
        \vspace{-1ex}%
        }
    \hspace{2pt}%
    \subfloat[persistent on NVM]%
        {\includegraphics[height=1.45in]%
            {fig/sls_latency_p}\label{fig:latency_p}
        \vspace{-1ex}
        }
    \hfill\strut
    \vspace{-2ex}%
    \captionsetup{justification=centering}
    \caption{Average latency on skiplists at 40 threads.\\
     $X$ labels are ratio of get:insert:remove.}
    \vspace{-1ex}%
    \label{fig:latency-sls}
\end{figure}
\else
\begin{figure*}
\begin{minipage}[b]{.28\textwidth}
    \includegraphics[width=2in,height=1.6in]%
        {fig/tpcc.pdf}
    \vspace{-3ex}
    \captionsetup{justification=centering}
    \caption{TPC-C skiplist throughput (log $Y$ axis).}
    \vspace{-1ex}
    \label{fig:tpcc}
\end{minipage}
\hfill
\addtocounter{figure}{1}    
\begin{minipage}[b]{.7\textwidth}
    \strut\hfill
    \subfloat[on DRAM]%
        {\includegraphics[height=1.5in]%
            {fig/sls_latency_dram} \label{fig:latency_dram}
        \vspace{-1ex}%
        }
    \subfloat[transient on NVM]%
        {\includegraphics[height=1.5in]%
            {fig/sls_latency_nvm}\label{fig:latency_nvm}
        \vspace{-1ex}%
        }
    \hspace{2pt}%
    \subfloat[persistent on NVM]%
        {\includegraphics[height=1.5in]%
            {fig/sls_latency_p}\label{fig:latency_p}
        \vspace{-1ex}
        }
    \hfill\strut
    \vspace{-2ex}%
\addtocounter{figure}{-1}   
    \captionsetup{justification=centering}
    \caption{Average latency on skiplists at 40 threads.\\
     $X$ labels are ratio of get:insert:remove.}
    \vspace{-1ex}%
    \label{fig:latency-sls}
\end{minipage}
\end{figure*}
\fi

In an attempt to assess the marginal cost of transaction composition,
we re-ran our microbenchmark on Fraser's original skiplist
(\textbf{Original}---no transactions), the NBTC-trans\-formed skip\-list
without transactions (\textbf{TxOff}---no calls to
\code{txBegin} or \code{txEnd}), and the NBTC-transformed skip\-list
with transactions (\textbf{TxOn}---as in
Figure~\ref{fig:throughput-sls}).

Figure~\ref{fig:latency_dram} reports latency for structures
placed in DRAM\@.  Without transactions, the transformed skip\-list
is 1.8$\times$ slower than the original.  With transactions turned on,
it's about 2.2$\times$ slower.
These results suggest that the more-than-doubled cost of CASes
(installing and uninstalling descriptors) accounts for about
$^2\kern-1.5pt/\kern-1pt_3$ of Medley's overhead.

\subsection{Persistence}
\label{sec:pm_exp}

To evaluate the impact of failure atomicity and durability on the
throughput of txMontage, we can return to the dotted lines of
Figures~\ref{fig:throughput-hts}, \ref{fig:throughput-sls},
and~\ref{fig:tpcc}.
%

\subsubsection*{Throughput}

In the microbenchmark tests,
with strict persistence and eager cache-line write-back, persistent
OneFile is an order of magnitude slower than its transient version.
With periodic persistence, however, the txMontage hash table achieves
half the throughput of Medley at 40 threads on the write-only
workload---almost two orders of magnitude faster than POneFile.
With a read-mostly workload on the hash table, or with any of the
workloads on the skiplist (with its lower overall concurrency), txMontage
is almost as fast as Medley.
In the extreme write-heavy case (80 threads on the 0:1:1 hash table
workload), we attribute the roughly 4$\times$ slowdown of txMontage
to NVM's write bottleneck~\cite{izraelevitz-optane-2019}---in
particular, to the phenomenon of
\emph{write amplification}~\cite{wang-micro-2020,hu-sigmod-2022}.

Results are similar in TPC-C (Fig.~\ref{fig:tpcc}).
Transactions here are both large and heavy on writes;
allocating payloads on NVM limits txMontage's throughput to roughly a
fifth of Medley's, but that is still about 4$\times$ faster than
transient OneFile.
POneFile, for its part, spent so long on the warm-up phase of TPC-C that
we lost patience and killed the test.

\subsubsection*{Latency}

Figure~\ref{fig:latency_nvm} shows the latency of skiplist transactions
when txMontage payloads are allocated on NVM (and indices on DRAM) but
persistence is turned off (no epochs or explicit cache line write-back).
For comparison, we have also shown the latency of the original,
non-transactional skiplist with \emph{all} data placed in NVM\@.
Figure~\ref{fig:latency_p} shows the corresponding latencies for
fully operational txMontage.

Comparing Figures~\ref{fig:latency_dram} and~\ref{fig:latency_nvm}, we
see lower marginal overhead for transactions when running on NVM.
This may suggest that the hardware write bottleneck is reducing overall
throughput and thus contention.

On the write-only workload (leftmost groups of bars), moving payloads to NVM
introduces an overhead of almost 50\% (Fig.~\ref{fig:latency_dram}
versus Fig.~\ref{fig:latency_nvm}).
On the read-mostly workload (rightmost bars), this penalty drops
to 5\%. Again, we attribute the effect to NVM's write bottleneck.  The high
latency of the original skiplist entirely allocated on NVM (green bars
in Figure~\ref{fig:latency_nvm}) appears to confirm this hypothesis.

Comparing Figures~\ref{fig:latency_nvm} and~\ref{fig:latency_p},
txMontage pays less than 5\%, relative to Medley on NVM, for failure atomicity
and durability.

\section{Conclusion}
\label{sec:conclusion}

We have presented nonblocking transaction composition (NBTC), a new
methodology that leverages the linearizability of existing nonblocking
data structures when building dynamic transactions.  As concrete
realizations, we introduced the \emph{Medley} system for transient
structures and the \emph{txMontage} system for (buffered) persistent
structures.
Medley transactions are isolated and consistent; txMontage transactions
are also failure atomic and durable.  Both systems are quite fast: where
even the best STM has traditionally suffered slowdowns of
3--10$\times$, Medley incurs more like 2.2$\times$;
txMontage, for its part, adds only 5--20\% to the overhead of nbMontage,
allowing it to outperform existing nonblocking persistent STM systems
by nearly two orders of magnitude.


Given their eager contention management, Medley and txMontage
maintain obstruction freedom for transactions on nonblocking
structures.  In future work, we plan to explore lazy contention
management, postponing installment of descriptors until transactions
are ready to commit.
\iftighten Such a change should allow us to
preserve lock freedom.
We also hope to develop a more precise definition of
helping, allowing us to 
\else By sorting and installing descriptors in canonical
order, the resulting systems would
preserve lock freedom.  Lazy
contention management would also facilitate helping, as any installed
descriptor would have \code{status == InProg}, and any other thread
could push it to completion.

As currently defined in NBTC, \emph{speculation intervals} are easy to
identify, but may unnecessarily instrument certain harmless helping
instructions between publication and linearization.
We are currently working to develop a more precise but still tractable
definition of helping in order to
\fi
reduce the number of ``critical''
memory accesses that must be performed atomically in each transaction.



\bibliography{arxiv}


\end{document}